\newenvironment{theorem}[2][Theorem]{\begin{trivlist}
\item[\hskip \labelsep {\bfseries #1}\hskip \labelsep {\bfseries #2.}]}{\end{trivlist}}
\newenvironment{lemma}[2][Lemma]{\begin{trivlist}
\item[\hskip \labelsep {\bfseries #1}\hskip \labelsep {\bfseries #2.}]}{\end{trivlist}}
\newenvironment{definition}[2][Definition]{\begin{trivlist}
\item[\hskip \labelsep {\bfseries #1}\hskip \labelsep {\bfseries #2.}]}{\end{trivlist}}
\newenvironment{remark}[2][Remark]{\begin{trivlist}
\item[\hskip \labelsep {\bfseries #1}\hskip \labelsep {\bfseries #2.}]}{\end{trivlist}}
\newenvironment{acknowledgement}[2][Acknowledgement]{\begin{trivlist}
\item[\hskip \labelsep {\bfseries #1}\hskip \labelsep {\bfseries #2.}]}{\end{trivlist}}
\newenvironment{$proof$}[2][$Proof$]{\begin{trivlist}
\item[\hskip \labelsep {\bfseries #1}\hskip \labelsep {\bfseries #2.}]}{\end{trivlist}}
\begin{document}


\title{On diagonal quantum channels}

\author{Amir R. Arab\\ Moscow Institute of Physics and Technology\\
9 Institutskiy per., Dolgoprudny, Moscow Region,\\ 141701, Russian Federation\\
\texttt{amir.arab@phystech.edu}\\}

\maketitle

\begin{abstract}
In this paper we study diagonal quantum channels and their structure by proving some results and giving most applicable instances of them. Firstly, it is shown that action of every diagonal quantum channel on pure state from computational basis is a convex combination of pure states determined by some transition probabilities. Finally, by using the Cholesky decomposition it is presented an algorithmic method to find an explicit form for Kraus operators of diagonal quantum channels.
\end{abstract}
\vspace{5mm} 
\textbf{Keywords:}Cholesky decomposition; diagonal quantum channels; Kraus operators.
\section{Introduction}
Quantum information theory took shape as a self-consistent and then multidisciplinary area of research from nearly 30 years ago, while its origin can be traced back to the 1950-1960s, which was when the basic ideas of Shannon's information theory were developed. In quantum information theory, the notions of channel and its capacity, giving a measure of ultimate information-processing performance of the channel, play a central role. For a comprehensive introduction to quantum channels, see [1]. A quantum channel is a communication channel which can transmit quantum information, as well as classical information. An example of quantum information is the state of a qubit. Quantum channels are the most general input-output relations which the framework of quantum mechanics allows for arbitrary inputs. Physically, they describe any transmission in space, e.g., through optical fibres, and/or evolution in time, as in quantum memories, from a general open-systems point of view. Mathematically,
they are characterized by linear, completely positive maps acting, in the Schr\"{o}dinger picture, on density operators in a trace-preserving manner.

Diagonal quantum channels have signiﬁcant applications in communication and physics. There are some studies on different types of diagonal channels, for instance depolarizing channels [2-4, 13], transpose depolarizing channels [5] and diagonal channels with constant Frobenius norm (depolarizing, transpose depolarizing, hybrid depolarizing classical, and hybrid transpose depolarizing classical) [6] which are introduced in section 3 of this paper. There are also some works which classify diagonal channels in special cases, for instance see [7] .

 Although we say "Kraus decomposition" it is not his result. It is only a corollary of the famous Stinespring theorem (1955) known e.g. to Choi [8]. Generally, there is no systematic way to find an explicit form for Kraus operators, and they can usually be found by trial and error method. There are some studies on finding the Kraus decomposition, for instance see [9, 10].

 This paper is organized as follows. In section 2, we provide basic definitions and theorems on the Cholesky decomposition and quantum channels. In section 3, we introduce diagonal quantum channels and for instance we give four families of them as the most applicable diagonal channels. In section 4, we prove an interesting property of diagonal channels which says that action of every diagonal channel on pure state from computational basis is a convex combination of pure states. And finally in section 5, by using the Cholesky decomposition we present Kraus representation of diagonal channels.\\
\section{Preliminaries}
\subsection{Cholesky decomposition}
\begin{definition}{1}
Let $\mathbb{C}^{n\times n}$ be the vector space of $n\times n$ matrices over $\mathbb{C}$. A Hermitian matrix $A\in \mathbb{C}^{n\times n}$ is called \emph{positive semi-definite (definite) }, if for every nonzero $x\in \mathbb{C}^n $, $x^{*}Ax\geq 0$ ($x^{*}Ax>0$), where $x^{*}$ is conjugate transpose of $x$.
\end{definition}
The Cholesky decomposition says: Every positive semi-definite matrix $A$ can be decomposed as $A=R^{*}R$, where $R$ is an upper triangular matrix. When $A$ is positive definite, the decomposition is unique [11, p.2]. The Cholesky decomposition goes back to the 1910 as it can be read from his hand-written manuscript deposited by his family in the Archives of the \'{E}cole poly-technique [12, p.91]. This decomposition has various applications such as in the Monte Carlo simulation, non-linear optimization, and linear least squares problems.\\
\subsection{Quantum channels}
\begin{definition}{2}
Linear map $\Phi:\mathbb{C}^{n\times n}\to \mathbb{C}^{l\times l}$ is called \emph{positive}, if for every positive semi-definite matrix $A\in \mathbb{C}^{n\times n}$, $\Phi(A)\in \mathbb{C}^{l\times l}$ is so.
\end{definition}
\begin{definition}{3}
For a positive integer $m$, linear map $\Phi:\mathbb{C}^{n\times n}\to \mathbb{C}^{l\times l}$ is called \emph{m-positive}, if $\Phi\otimes \mathrm{Id}_{m}:\mathbb{C}^{n\times n}\bigotimes\mathbb{C}^{m\times m}\to\mathbb{C}^{l\times l}\bigotimes\mathbb{C}^{m\times m}$ is a positive map, where $\mathrm{Id}_{m}:\mathbb{C}^{m\times m}\to\mathbb{C}^{m\times m}$ is the identity map and $\bigotimes$ is the tensor (Kronecker) product.
\end{definition}
\begin{definition}{4}
Linear map $\Phi:\mathbb{C}^{n\times n}\to \mathbb{C}^{l\times l}$ is called \emph{completely positive}, if for every positive integer m, $\Phi$ is m-positive.
\end{definition}
For example, $\Phi(A) = A^{t}$ ($^{t}$ is transposition) is positive but not completely positive. There are two fundamental theorems which classify completely positive maps:
\begin{theorem}{1} $\Phi:\mathbb{C}^{n\times n}\to \mathbb{C}^{l\times l}$ is completely positive map if and only if there exist $n\times l$ matrices $K_{i}$ such that
\begin{align}
\Phi(A)=\sum_{i}K^{*}_{i}AK_{i},\notag
\end{align}
$K_{i}$'s are called Kraus operators of $\Phi$. Kraus operators are not unique.
\end{theorem}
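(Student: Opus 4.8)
The plan is to prove both implications, treating the construction of the Kraus operators (necessity) as the substantive direction. For the easy direction, suppose $\Phi(A)=\sum_i K_i^* A K_i$. Fix a positive integer $m$ and a positive semi-definite $B\in\mathbb{C}^{n\times n}\otimes\mathbb{C}^{m\times m}$. Using linearity together with $(\Phi\otimes\mathrm{Id}_m)(A\otimes Y)=\Phi(A)\otimes Y$, one checks that
$$(\Phi\otimes\mathrm{Id}_m)(B)=\sum_i (K_i\otimes I_m)^* B (K_i\otimes I_m).$$
Each summand has the form $X^*BX$, hence is positive semi-definite whenever $B$ is, and a finite sum of positive semi-definite matrices is positive semi-definite. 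Thus $\Phi\otimes\mathrm{Id}_m$ is positive for every $m$, so $\Phi$ is completely positive by Definition 4.

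For the necessity direction, the key device is the Choi matrix. Let $E_{jk}\in\mathbb{C}^{n\times n}$ be the standard matrix units and form
$$C=(\Phi\otimes\mathrm{Id}_n)\Big(\sum_{j,k=1}^{n} E_{jk}\otimes E_{jk}\Big)=\sum_{j,k=1}^{n}\Phi(E_{jk})\otimes E_{jk}\in\mathbb{C}^{l\times l}\otimes\mathbb{C}^{n\times n}.$$
The matrix $\sum_{j,k}E_{jk}\otimes E_{jk}$ is a scalar multiple of the rank-one projector onto $\sum_j e_j\otimes e_j$, hence positive semi-definite. Since complete positivity of $\Phi$ entails $n$-positivity, applying $\Phi\otimes\mathrm{Id}_n$ preserves this, so $C\geq 0$. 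It is worth emphasizing that only $n$-positivity is used in this step; the full strength of complete positivity is not needed to carry out the construction.

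Since $C$ is positive semi-definite, I would decompose it into rank-one positive terms, $C=\sum_i |w_i\rangle\langle w_i|$ with $w_i\in\mathbb{C}^{l}\otimes\mathbb{C}^{n}$. Such a decomposition follows from the spectral theorem, or --- fitting the theme of this paper --- from the Cholesky decomposition $C=R^*R$, whose rows supply the vectors $w_i$. Reshaping each $w_i$ into a matrix $M_i\in\mathbb{C}^{l\times n}$ via $w_i=\sum_{a,k}(M_i)_{ak}\,e_a\otimes e_k$ and setting $K_i=M_i^*\in\mathbb{C}^{n\times l}$, one verifies by a direct entrywise comparison that $\Phi(E_{kp})=\sum_i K_i^* E_{kp} K_i$ holds for all matrix units $E_{kp}$. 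Extending by linearity then yields $\Phi(A)=\sum_i K_i^* A K_i$ for every $A$. Non-uniqueness becomes transparent from this construction, since any isometric modification of the decomposition $C=\sum_i|w_i\rangle\langle w_i|$ leaves $\Phi$ unchanged.

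The main obstacle I anticipate is the bookkeeping in the reshaping step: matching the convention $\Phi(A)=\sum_i K_i^* A K_i$ with $K_i$ of size $n\times l$ forces a specific transpose-and-conjugation relationship between $w_i$, $M_i$, and $K_i$. Concretely, the entrywise identity $(\Phi(E_{kp}))_{ab}=\sum_i (M_i)_{ak}\,\overline{(M_i)_{bp}}$ coming from the decomposition must be aligned with $\big(\sum_i K_i^* E_{kp} K_i\big)_{ab}=\sum_i \overline{(K_i)_{ka}}\,(K_i)_{pb}$, and it is precisely this alignment that singles out $K_i=M_i^*$. Everything else is conceptually routine once positivity of the Choi matrix is established.
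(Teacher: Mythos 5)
Your proof is correct in both directions and takes essentially the same route the paper relies on: the paper defers the proof of Theorem 1 to Choi [8], whose argument (positivity of the Choi matrix via $n$-positivity, rank-one decomposition, reshaping the vectors into matrices) is exactly what you carry out, and the paper's own Lemmas 1--2 in Section 5 implement this same decomposition-and-reshaping construction, even down to using the Cholesky factorization $C=R^{*}R$. Your bookkeeping with $K_i=M_i^{*}$ checks out entrywise, so nothing further is needed.
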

\begin{theorem}{2}
$\Phi:\mathbb{C}^{n\times n}\to \mathbb{C}^{l\times l}$ is completely positive map if and only if the following matrix is positive semi-definite:
\begin{align}
C_{\Phi}=\sum_{i,j=1}^{n}E_{ij}\bigotimes \Phi(E_{ij}),\notag
\end{align}
where $\{E_{ij}\}_{i,j=1}^{n}$ is the standard basis for $\mathbb{C}^{n\times n}$. $C_{\Phi} $ is called Choi matrix of $\Phi$.
\end{theorem}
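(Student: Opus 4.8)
The plan is to prove both implications, using Theorem 1 (the Kraus characterization) as the bridge for the harder direction. First I would record the useful reformulation $C_{\Phi}=(\mathrm{Id}_{n}\bigotimes\Phi)(\Omega)$, where $\Omega=\sum_{i,j=1}^{n}E_{ij}\bigotimes E_{ij}$. Since $\Omega=vv^{*}$ with $v=\sum_{i=1}^{n}e_{i}\bigotimes e_{i}$ (the $e_{i}$ being the standard basis of $\mathbb{C}^{n}$), the matrix $\Omega$ is rank-one and positive semi-definite. This is the single positive semi-definite test matrix through which the whole statement is read off.

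For the forward direction, assume $\Phi$ is completely positive. Then in particular $\Phi$ is $n$-positive, so $\Phi\bigotimes\mathrm{Id}_{n}$ is a positive map. The map appearing in the Choi matrix is instead $\mathrm{Id}_{n}\bigotimes\Phi$, but the two differ only by conjugation with the swap (flip) operator on the input and on the output tensor factors; because conjugation by a unitary preserves positive semi-definiteness, $\mathrm{Id}_{n}\bigotimes\Phi$ is positive as well. Applying it to the positive semi-definite matrix $\Omega$ then gives that $C_{\Phi}=(\mathrm{Id}_{n}\bigotimes\Phi)(\Omega)$ is positive semi-definite.

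For the converse, assume $C_{\Phi}\in\mathbb{C}^{nl\times nl}$ is positive semi-definite. I would diagonalize it and absorb the nonnegative eigenvalues into the eigenvectors, writing $C_{\Phi}=\sum_{k}v_{k}v_{k}^{*}$ with $v_{k}\in\mathbb{C}^{n}\bigotimes\mathbb{C}^{l}$. Expanding $v_{k}=\sum_{i=1}^{n}e_{i}\bigotimes\beta_{k}^{i}$ with $\beta_{k}^{i}\in\mathbb{C}^{l}$ and reshaping, the $(i,j)$ block of $C_{\Phi}$ equals $\sum_{k}\beta_{k}^{i}(\beta_{k}^{j})^{*}$; on the other hand, by the very definition of $C_{\Phi}$ this block is $\Phi(E_{ij})$. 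Setting $K_{k}=\sum_{i=1}^{n}e_{i}(\beta_{k}^{i})^{*}\in\mathbb{C}^{n\times l}$ and using $A=\sum_{i,j}A_{ij}E_{ij}$, I would compute $\Phi(A)=\sum_{i,j}A_{ij}\Phi(E_{ij})=\sum_{k}K_{k}^{*}AK_{k}$, which is precisely a Kraus representation in the paper's convention. By Theorem 1, $\Phi$ is completely positive.

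The block/reshaping bookkeeping is routine once the rank-one form of $\Omega$ is in hand. The hard part will be keeping the tensor-factor ordering straight: the definition of $m$-positivity attaches the ancilla as $\Phi\bigotimes\mathrm{Id}_{m}$, whereas $C_{\Phi}$ is built as $\mathrm{Id}_{n}\bigotimes\Phi$, so the swap-equivalence step in the forward direction must be stated carefully. Likewise, in the converse one must verify that the reshaping yields $n\times l$ operators $K_{k}$ entering as $K_{k}^{*}AK_{k}$, matching this paper's convention rather than the more familiar $K_{k}AK_{k}^{*}$.
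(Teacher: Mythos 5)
Your proposal is correct. Note that the paper itself gives no proof of Theorem 2 at all: it states Theorems 1 and 2 and defers both to Choi's paper (reference [8]), so there is no internal argument to compare against; what you have written is essentially Choi's original proof, with the rank-one form $\Omega=vv^{*}$, $v=\sum_{i}e_{i}\otimes e_{i}$, driving the forward direction and the eigendecomposition-plus-reshaping construction of the operators $K_{k}$ driving the converse. One point worth making explicit: invoking Theorem 1 at the end is not circular, because you only use its easy direction (a map of the form $A\mapsto\sum_{k}K_{k}^{*}AK_{k}$ is completely positive), which is proved independently of Theorem 2; indeed your converse argument, which exhibits the Kraus operators directly from the blocks $\Phi(E_{ij})=\sum_{k}\beta_{k}^{i}(\beta_{k}^{j})^{*}$, is exactly how the cycle of implications (CP $\Rightarrow$ Choi PSD $\Rightarrow$ Kraus form $\Rightarrow$ CP) is closed in [8]. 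Your care with the tensor-factor ordering (the definition of $m$-positivity uses $\Phi\otimes\mathrm{Id}_{m}$ while $C_{\Phi}$ uses $\mathrm{Id}_{n}\otimes\Phi$, reconciled by unitary swap conjugation) and with the paper's convention $K^{*}AK$ for $n\times l$ Kraus operators is exactly the bookkeeping needed for the statement as printed.
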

To see a proof for these theorems refer [8].
\begin{definition}{5}
Completely positive map $\Phi:\mathbb{C}^{n\times n}\to \mathbb{C}^{l\times l}$ is called \emph{quantum channel}, if it is \emph{trace-preserving}, i.e. $\mathrm{tr}(\Phi(A))=\mathrm{tr}(A)$.
\end{definition}
\begin{definition}{6}
A positive definite matrix $A$ of trace one is called \emph{density matrix}.
\end{definition}
In fact requirement that density matrices are mapped to density matrices leads to the notion of a trace-preserving completely positive map. It is noncommutative analogue of a probability distribution, i.e. a vector whose coordinates are nonnegative and up to one. Trace-preserving property is equivalent to $\sum_{i}K_{i}K^{*}_{i}=I$, where $I$ is $n\times n$ identity matrix.\\

\section{Diagonal channels}
Let us construct an orthonormal basis of Hermitian matrices for $\mathbb{C}^{n\times n}$ with respect to the Hilbert-Schmidt inner product on $\mathbb{C}^{n\times n}$, i.e. $\langle A|B\rangle =\mathrm{tr}(A^{*}B)$.
For $n=2$, the set of Pauli matrices $\{\sigma_{0},\sigma_{1},\sigma_{2},\sigma_{3}\}$ are Hermitian and form an orthogonal basis for $\mathbb{C}^{2\times 2}$, where
\begin{align}
\sigma_{0}=I,\quad \sigma_{1}=\left(
                           \begin{array}{cc}
                             0 & 1 \\
                             1 & 0 \\
                           \end{array}
                         \right) ,\quad \sigma_{2}=\left(
                                                \begin{array}{cc}
                                                  0 & -i \\
                                                  i & 0 \\
                                                \end{array}
                                              \right) ,\quad \sigma_{3}=\left(
                                                                    \begin{array}{cc}
                                                                      1 & 0 \\
                                                                      0 & -1 \\
                                                                    \end{array}
                                                                  \right).\notag
\end{align}
\begin{definition}{7}
For arbitrary $n\geq 3$, the \emph{generalized Pauli matrices} are defined in the following way:
\begin{center}
$\sigma_{0}=I;$
\end{center}
{\tiny\[\sigma_{11}=\left(
              \begin{array}{cccc}
                0 & 1 &... & 0 \\
                1 & 0 &... & 0 \\
               ... &...&... &... \\
                0 & 0 &... & 0 \\
              \end{array}
            \right), \sigma_{12}=\left(
                       \begin{array}{ccccc}
                         0 & 0 & 1 &...& 0 \\
                         0 & 0 & 0 &... & 0 \\
                         1 & 0 & 0 &... & 0 \\
                        ... &... &... &... &... \\
                         0 & 0 & 0 &... & 0 \\
                       \end{array}
                     \right), ..., \sigma_{1N}=\left(
              \begin{array}{cccc}
                0 & 0 &... & 0 \\
                ... &... &... &... \\
                0 &...& 0 & 1 \\
                0 &... & 1 & 0 \\
              \end{array}
            \right); \]}
{\tiny\[\sigma_{21}=\left(
              \begin{array}{cccc}
                0 & -i &... & 0 \\
                i & 0 &... & 0 \\
               ... &...&... &... \\
                0 & 0 &... & 0 \\
              \end{array}
            \right), \sigma_{22}=\left(
                       \begin{array}{ccccc}
                         0 & 0 & -i &...& 0 \\
                         0 & 0 & 0 &... & 0 \\
                         i & 0 & 0 &... & 0 \\
                        ... &... &... &... &... \\
                         0 & 0 & 0 &... & 0 \\
                       \end{array}
                     \right), ..., \sigma_{2N}=\left(
              \begin{array}{cccc}
                0 & 0 &... & 0 \\
                ... &... &... &... \\
                0 &...& 0 & -i \\
                0 &... & i & 0 \\
              \end{array}
            \right); \]}
{\tiny\[\sigma_{31}=\left(
              \begin{array}{cccc}
                1 & 0 &... & 0 \\
                0 & -1 &... & 0 \\
               ... &...&... &... \\
                0 & 0 &... & 0 \\
              \end{array}
            \right), \sigma_{32}=\left(
                       \begin{array}{ccccc}
                         1 & 0 & 0 &...& 0 \\
                         0 & 0 & 0 &... & 0 \\
                         0 & 0 & -1 &... & 0 \\
                        ... &... &... &... &... \\
                         0 & 0 & 0 &... & 0 \\
                       \end{array}
                     \right), ..., \sigma_{3N}=\left(
              \begin{array}{cccc}
                0 & 0 &... & 0 \\
                ... &... &... &... \\
                0 &...& 1 & 0 \\
                0 &... & 0 & -1 \\
              \end{array}
            \right); \]}
where $N=\frac{n(n-1)}{2}$.
\end{definition}
The generalized Pauli matrices are Hermitian and orthogonal, but they couldn't form a basis for $\mathbb{C}^{n\times n}$. We remove matrices $\sigma_{31}, \sigma_{32},...,\sigma_{3N}$ and add the following matrices:
{\tiny\[A_{1}=\left(
              \begin{array}{cccc}
                1 & 0 &... & 0 \\
                0 & -1 &... & 0 \\
               ... &...&... &... \\
                0 & 0 &... & 0 \\
              \end{array}
            \right), A_{2}=\left(
                       \begin{array}{ccccc}
                         1 & 0 & 0 &...& 0 \\
                         0 & 1 & 0 &... & 0 \\
                         0 & 0 & -2 &... & 0 \\
                        ... &... &... &... &... \\
                         0 & 0 & 0 &... & 0 \\
                       \end{array}
                     \right), ..., A_{n-1}=\left(
              \begin{array}{cccc}
                1 &... & 0 & 0 \\
                ... &... &... &... \\
                0 &...& 1 & 0 \\
                0 &... & 0 & -(n-1) \\
              \end{array}
            \right). \]}
We consider
\begin{align}
\textbf{$\beta$}=\Bigg\{\frac{\sigma_{0}}{\sqrt{n}},\frac{\sigma_{11}}{\sqrt{2}},...,\frac{\sigma_{1N}}{\sqrt{2}},\frac{\sigma_{21}}{\sqrt{2}},...,\frac{\sigma_{2N}}{\sqrt{2}},\frac{A_{1}}{\sqrt{2}},\frac{A_{2}}{\sqrt{6}},...,\frac{A_{n-1}}{\sqrt{(n-1)n}}\Bigg\},\notag
\end{align}
$\beta$ is an orthonormal basis of Hermitian matrices [6].
\begin{definition}{8}
Quantum channel $\Phi:\mathbb{C}^{n\times n}\to \mathbb{C}^{n\times n}$ is called \emph{diagonal}, if its representation with respect to the basis $\beta$ is diagonal, i.e.
\begin{align}
\Phi=\mathrm{diag}(1,a_{1},a_{2},...,a_{n^{2}-1}).\notag
\end{align}
\end{definition}
Here we introduce four families of diagonal channels:\\
1. $\Phi=\mathrm{diag}(1,\underbrace{p,...,p}_{N},\underbrace{p,...,p}_{N},\underbrace{p,...,p}_{n-1}),$ where $-\frac{1}{n^{2}-1}\leq p\leq 1$,\\
2. $\Phi=\mathrm{diag}(1,\underbrace{p,...,p}_{N},\underbrace{-p,...,-p}_{N},\underbrace{p,...,p}_{n-1}),$ where $-\frac{1}{n-1}\leq p\leq \frac{1}{n+1}$,\\
3. $\Phi=\mathrm{diag}(1,\underbrace{-p,...,-p}_{N},\underbrace{-p,...,-p}_{N},\underbrace{p,...,p}_{n-1}),$ where $-\frac{1}{2n-1}\leq p\leq \frac{1}{(n-1)^{2}}$,\\
4. $\Phi=\mathrm{diag}(1,\underbrace{-p,...,-p}_{N},\underbrace{p,...,p}_{N},\underbrace{p,...,p}_{n-1}),$ where $-\frac{1}{n-1}\leq p\leq \frac{1}{n+1}$.\\
Quantum channels 1, 2, 3, and 4 are called depolarizing, transpose depolarizing, hybrid depolarizing classical, and hybrid transpose depolarizing classical respectively.
They are among most widely used channels in science and technology.
\section{Transition probabilities}
In this section we prove an interesting property of diagonal channels which is formulated in the following theorem.
\begin{theorem}{3}
\emph{For every diagonal quantum channel} $\Phi$, \emph{there is a collection of transition probabilities} $\{P_{kj}\}_{j=1}^{n}$
$,i.e.\hspace{1mm} P_{kj}\geq 0, \sum_{j=1}^{n}P_{kj}=1$ \emph{such that}
\begin{align}
\Phi(|k\rangle\langle k|)=\sum_{j=1}^{n}P_{kj}|j\rangle\langle j| \qquad (k=1,2,...,n).\notag
\end{align}

\end{theorem}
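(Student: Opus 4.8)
The plan is to exploit the block structure that a diagonal channel induces on the space of diagonal matrices. First I would observe that among the elements of the basis $\beta$, exactly $\sigma_0=I$ together with $A_1,\dots,A_{n-1}$ are diagonal matrices, whereas every remaining element $\sigma_{1j},\sigma_{2j}$ has vanishing diagonal. These $n$ diagonal elements are linearly independent, and there are precisely $n$ of them, so they form a basis of the $n$-dimensional subspace $\mathcal{D}\subset\mathbb{C}^{n\times n}$ of diagonal matrices.

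Next I would note that the pure state $|k\rangle\langle k|=E_{kk}$ is itself diagonal, hence lies in $\mathcal{D}$ and can be written as a real linear combination of $\sigma_0/\sqrt{n}$ and $A_1/\sqrt{2},\dots,A_{n-1}/\sqrt{(n-1)n}$. Since $\Phi$ is diagonal with respect to $\beta$, each of these basis elements is an eigenvector of $\Phi$, so $\Phi$ carries $\mathcal{D}$ into itself. Consequently $\Phi(E_{kk})$ is again diagonal, and because $\Phi$ preserves Hermiticity its diagonal entries are real; I then write $\Phi(E_{kk})=\sum_{j=1}^{n}P_{kj}\,|j\rangle\langle j|$ with $P_{kj}\in\mathbb{R}$.

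It then remains to verify the two defining properties. Nonnegativity $P_{kj}\geq 0$ follows from positivity of $\Phi$: the matrix $E_{kk}$ is a rank-one projection, hence positive semi-definite, so by Definition 2 (a special case of Theorem 1) $\Phi(E_{kk})$ is positive semi-definite, and a diagonal positive semi-definite matrix has nonnegative diagonal entries. The normalization $\sum_{j=1}^{n}P_{kj}=1$ is immediate from the trace-preserving property of Definition 5, since $\sum_{j}P_{kj}=\mathrm{tr}(\Phi(E_{kk}))=\mathrm{tr}(E_{kk})=1$. Thus $\{P_{kj}\}_{j=1}^{n}$ is a collection of transition probabilities of the required form.

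The only genuinely substantive step is the first: recognizing that the diagonal matrices constitute a $\Phi$-invariant subspace spanned by a distinguished subfamily of $\beta$. Once this invariance is established, the rest is a direct appeal to positivity and to trace preservation, and I anticipate no real obstacle. The one point requiring care is confirming that $\sigma_0$ and the $A_m$ exhaust the diagonal part of $\beta$, while the $\sigma_{1j}$ and $\sigma_{2j}$ contribute nothing to the diagonal, so that the coefficient count gives exactly the dimension $n$ of $\mathcal{D}$.
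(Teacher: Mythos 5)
Your proof is correct, but it takes a genuinely different route from the paper's. The paper expands $E_{kk}$ explicitly in the basis $\beta$, computing every coefficient $\langle e\,|\,E_{kk}\rangle$, and thereby obtains closed formulas for the diagonal entries $[\Phi(E_{kk})]_{jj}$ in terms of the channel's eigenvalues $t_1,\dots,t_{n-1}$ (its equations (3)--(7)); nonnegativity is then deduced from Theorem 2, because the numbers $[\Phi(E_{kk})]_{jj}$ sit on the diagonal of the Choi matrix $C_{\Phi}$, which is positive semi-definite. You instead make two structural observations: first, that the diagonal members $\sigma_{0},A_{1},\dots,A_{n-1}$ of $\beta$ span the $n$-dimensional subspace $\mathcal{D}$ of diagonal matrices and are eigenvectors of $\Phi$, so $\mathcal{D}$ is $\Phi$-invariant and $\Phi(E_{kk})$ is diagonal with no coordinate computation at all; second, that nonnegativity follows from plain positivity of $\Phi$ (i.e.\ $1$-positivity, which complete positivity implies), since $E_{kk}$ is positive semi-definite and any positive semi-definite matrix has nonnegative diagonal entries. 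Both routes obtain $\sum_{j}P_{kj}=1$ from trace preservation (the paper relegates this to Remark 1). Your argument is shorter and more elementary --- it never needs the Choi matrix --- and it makes the invariance of $\mathcal{D}$ explicit, which the paper leaves implicit in its coefficient computation. What the paper's heavier computation buys is reusability: its formulas (1)--(2) for $\Phi(E_{kk})$ in terms of the $t_{i}$ are invoked verbatim at the start of the proof of Theorem 4 to build the Choi matrix of the hybrid depolarizing channel, information that your abstract invariance argument does not supply.
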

\begin{proof}
For convenience we write basis $\beta$ in the following form:
\begin{align}
 \beta=\{e_{0},e_{11},...,e_{1N},e_{21},...,e_{2N},e_{31},...,e_{3(n-1)}\}.\notag
\end{align}
Let $\Phi$ be a diagonal channel and its matrix representation with respect to the basis $\beta$ is
\begin{align}
\Phi=\mathrm{diag}(1,r_{1},r_{2},...,r_{N},s_{1},s_{2},...,s_{N},t_{1},t_{2},...,t_{n-1}).\notag
\end{align}
Suppose that $\{|j\rangle\}$ is the standard basis for $\mathbb{C}^{n}$. We can write
\begin{align}
E_{kk}=|k\rangle\langle k|=c_{0}e_{0}+\sum_{p=1}^{2}\sum_{q=1}^{N}c_{pq}e_{{pq}}+\sum_{q=1}^{n-1}c_{3q}e_{3q},\notag
\end{align}
and calculate the coefficients in the following way:
\begin{gather*}
c_{0}=\langle e_{0}|E_{kk}\rangle=\frac{1}{\sqrt{n}},\quad c_{pq}=\langle e_{pq}|E_{kk}\rangle=0 ,\quad 1\leq p\leq 2,\quad 1\leq q\leq N,\\
c_{3q}=\langle e_{3q}|E_{kk}\rangle=0,\quad 1\leq q\leq k-2,\quad c_{3(k-1)}=\langle e_{3(k-1)}|E_{kk}\rangle=-\sqrt{\frac{k-1}{k}},\\
c_{3q}=\langle e_{3q}|E_{kk}\rangle=\frac{1}{\sqrt{q(q+1)}},\quad k\leq q\leq n-1.\notag
\end{gather*}

\flushleft{Therefore for $k=1,2,...,n-1$}
\begin{gather*}
E_{kk}=\frac{1}{\sqrt{n}}e_{0}-\sqrt{\frac{k-1}{k}}e_{3(k-1)}+\sum_{i=k}^{n-1}\frac{1}{\sqrt{i(i+1)}}e_{3i},\\
E_{nn}=\frac{1}{\sqrt{n}}e_{0}-\sqrt{\frac{n-1}{n}}e_{3(n-1)},\notag
\end{gather*}
and applying $\Phi$ we obtain
\begin{gather}
\Phi(E_{kk})=\frac{1}{\sqrt{n}}e_{0}-t_{k-1}\sqrt{\frac{k-1}{k}}e_{3(k-1)}+\sum_{i=k}^{n-1}\frac{t_{i}}{\sqrt{i(i+1)}}e_{3i},\\
\Phi(E_{nn})=\frac{1}{\sqrt{n}}e_{0}-t_{n-1}\sqrt{\frac{n-1}{n}}e_{3(n-1)}.
\end{gather}
It follows from (1)
\begin{numcases}{[\Phi(E_{kk})]_{jj}=}
\frac{1}{n}-\frac{t_{k-1}}{k}+\sum_{i=k}^{n-1}\frac{t_{i}}{i(i+1)} & $\qquad 1\leq j\leq k-1$\label{positive},\\
\frac{1}{n}+\frac{(k-1)t_{k-1}}{k}+\sum_{i=k}^{n-1}\frac{t_{i}}{i(i+1)} & $\qquad j=k$\label{negative},\\
\frac{1}{n}-\frac{t_{j-1}}{j}+\sum_{i=j}^{n-1}\frac{t_{i}}{i(i+1)} & $\qquad k+1\leq j\leq n$\label{positive}.
\end{numcases}
Taking into account (2) we get
\begin{numcases}{[\Phi(E_{nn})]_{jj}=}
\frac{1}{n}-\frac{t_{n-1}}{n} & $\qquad 1\leq j\leq n-1$\label{positive},\\
\frac{1}{n}+\frac{(n-1)t_{n-1}}{n} & $\qquad j=n$\label{negative}.
\end{numcases}
Now we consider the Choi matrix of $\Phi$. We note that $\Phi(E_{kk}),(k=1,2,...,n)$ are $n\times n$ diagonal matrices and their diagonals are given by (3)-(7). By theorem 2, $C_{\Phi}$ is positive semi-definite, then all of its diagonal entries are nonnegative. For $1\leq k,j\leq n$, we define $P_{kj}$ in the following manner:
\begin{align}
P_{kj}=[\Phi(E_{kk})]_{jj}.\notag
\end{align}
Hence, $P_{kj}\geq 0$ and for every $k\in \{1,2,...,n\}$, the equality $\sum_{j=1}^{n}P_{kj}=1$ holds true. Therefore $\{P_{kj}\}$ are the desired transition probabilities.
\end{proof}
\begin{remark}{1}
Equality $\sum_{j=1}^{n}P_{kj}=1$ can be derived from the fact that $\Phi$ is trace-preserving.
\end{remark}
\section{Kraus representation for diagonal channel}
Before we formulate the result of this section, we need to prove the following two lemmas.
\begin{lemma}{1}
\emph{Let} $\kappa=(x_{1},x_{2},...,x_{n})$ \emph{where} $x_{i}$\emph{'s are} \emph{rows of} $n\times n$ \emph{matrix} $K$, \emph{then} $(K^{*}E_{ij}K)_{1\leq i,j\leq n}=(x_{i}^{*}x_{j})_{1\leq i,j\leq n}=\kappa^{*}\kappa$.
\end{lemma}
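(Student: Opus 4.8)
The plan is to exploit the rank-one factorization of the matrix units $E_{ij}$. Writing $e_{1},\dots,e_{n}$ for the standard column basis of $\mathbb{C}^{n}$, one has $E_{ij}=e_{i}e_{j}^{*}$ (here $e_{j}^{*}$ is the corresponding row vector), so that
\begin{align}
K^{*}E_{ij}K=K^{*}e_{i}e_{j}^{*}K=(K^{*}e_{i})(e_{j}^{*}K).\notag
\end{align}
First I would identify the two factors on the right. Since $K$ has rows $x_{1},\dots,x_{n}$, its $(a,b)$ entry is $(x_{a})_{b}$; hence the $i$-th column of $K^{*}$ has $a$-th entry $\overline{(x_{i})_{a}}$, which says precisely that $K^{*}e_{i}=x_{i}^{*}$, the column obtained by conjugate-transposing the row $x_{i}$. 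Likewise $e_{j}^{*}K$ selects the $j$-th row of $K$, so $e_{j}^{*}K=x_{j}$.

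Substituting these into the display gives $K^{*}E_{ij}K=x_{i}^{*}x_{j}$, the outer product of the column $x_{i}^{*}$ with the row $x_{j}$; in coordinates $(K^{*}E_{ij}K)_{ab}=\overline{(x_{i})_{a}}(x_{j})_{b}$, which matches $x_{i}^{*}x_{j}$ entrywise. This establishes the first equality of the lemma.

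For the second equality, I would view $\kappa=(x_{1},\dots,x_{n})$ as the single row vector of length $n^{2}$ obtained by concatenating the rows of $K$. Then $\kappa^{*}$ is the column vector stacking $x_{1}^{*},\dots,x_{n}^{*}$, and the outer product $\kappa^{*}\kappa$ is an $n^{2}\times n^{2}$ matrix that splits naturally into $n\times n$ blocks, the $(i,j)$ block being exactly $x_{i}^{*}x_{j}$. Matching this block decomposition against the array $(K^{*}E_{ij}K)_{1\leq i,j\leq n}$ computed above yields $(K^{*}E_{ij}K)_{1\leq i,j\leq n}=(x_{i}^{*}x_{j})_{1\leq i,j\leq n}=\kappa^{*}\kappa$.

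The computation is elementary and I expect no genuine obstacle; the only point requiring care is the bookkeeping, namely keeping the distinction between rows and columns of $K$ straight under conjugate transposition, and correctly reading off the block structure of $\kappa^{*}\kappa$.
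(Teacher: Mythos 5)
Your proposal is correct and follows essentially the same route as the paper: both reduce the claim to the identity $K^{*}E_{ij}K=x_{i}^{*}x_{j}$ and then read off $\kappa^{*}\kappa$ as the block matrix $(x_{i}^{*}x_{j})_{1\leq i,j\leq n}$. The only cosmetic difference is that you obtain the identity via the rank-one factorization $E_{ij}=e_{i}e_{j}^{*}$, whereas the paper expands the relevant parts of $K^{*}$ and $K$ in matrix units and uses $E_{si}E_{ij}E_{jt}=E_{st}$; the two computations are interchangeable.
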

\begin{proof}
For $m=1,2,...,n$, let $x_{m}=\left(
             \begin{array}{cccccc}
               x_{m1} & x_{m2} & . & . & . & x_{mn} \\
             \end{array}
           \right)$, then
\begin{gather*}
K^{*}E_{ij}K=\left(
               \begin{array}{cccccc}
                 x_{1}^{*} & x_{2}^{*} & . & . & . & x_{n}^{*}
                 \end{array}
             \right)E_{ij}\left(
                                                                                                               \begin{array}{c}
                                                                                                                 x_{1} \\
                                                                                                                 x_{2} \\
                                                                                                                 . \\
                                                                                                                 . \\
                                                                                                                 . \\
                                                                                                                 x_{n} \\
                                                                                                               \end{array}
                                                                                                             \right)\\
=(\sum_{s=1}^{n}\overline{x_{is}}E_{si})E_{ij}(\sum_{t=1}^{n}x_{jt}E_{jt})=\sum_{s,t=1}^{n}\overline{x_{is}}x_{jt}E_{st}=x_{i}^{*}x_{j}.
\end{gather*}
On the other hand, $\kappa^{*}\kappa=(x_{i}^{*}x_{j})_{1\leq i,j\leq n}$.
\end{proof}
\begin{lemma}{2}
\emph{Let} $\Phi:\mathbb{C}^{n\times n}\to \mathbb{C}^{n\times n}$ \emph{be a quantum channel,} $C_{\Phi}$ \emph{be its Choi matrix, and} $C_{\Phi}=R^{*}R$ \emph{for some matrix $R$}. \emph{If} $\kappa_{i}$\emph{'s} \emph{are rows of} $R$, \emph{and} $K_{i}$\emph{'}\emph{s are associated matrices to} $\kappa_{i}$\emph{'s in lemma} 1 $(1\leq i\leq n^{2})$ \emph{then} $\{K_{i}\}_{i=1}^{n^{2}}$ \emph{is a set of Kraus operators of} $\Phi$.
\end{lemma}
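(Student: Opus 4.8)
The plan is to exploit the block structure of the Choi matrix together with Lemma 1, recognizing $\sum_i K_i^* E_{pq} K_i$ as precisely the $(p,q)$ block of $C_\Phi$. First I would record two descriptions of $C_\Phi$ as an $n\times n$ array of $n\times n$ blocks. On one hand, the definition $C_\Phi=\sum_{p,q=1}^{n}E_{pq}\otimes\Phi(E_{pq})$ shows that the $(p,q)$ block of $C_\Phi$ is exactly $\Phi(E_{pq})$. On the other hand, writing $R$ with rows $\kappa_1,\dots,\kappa_{n^2}$, the factorization $C_\Phi=R^*R$ expands as $C_\Phi=\sum_{i=1}^{n^2}\kappa_i^*\kappa_i$, a sum of rank-one $n^2\times n^2$ matrices.

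Next I would apply Lemma 1 to each summand. By construction $K_i$ is the $n\times n$ matrix whose rows are the consecutive length-$n$ segments of $\kappa_i$, so Lemma 1 yields $\kappa_i^*\kappa_i=(K_i^*E_{pq}K_i)_{1\le p,q\le n}$; that is, the $(p,q)$ block of $\kappa_i^*\kappa_i$ equals $K_i^*E_{pq}K_i$. Summing over $i$, the $(p,q)$ block of $C_\Phi$ equals $\sum_{i=1}^{n^2}K_i^*E_{pq}K_i$.

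Finally I would equate the two block descriptions to obtain $\Phi(E_{pq})=\sum_{i=1}^{n^2}K_i^*E_{pq}K_i$ for all $1\le p,q\le n$. Since $\{E_{pq}\}$ is a basis of $\mathbb{C}^{n\times n}$ and both sides are linear in the matrix argument, this extends to $\Phi(A)=\sum_i K_i^*AK_i$ for every $A$, which by Theorem 1 exhibits $\{K_i\}_{i=1}^{n^2}$ as Kraus operators of $\Phi$. One may additionally note that trace preservation of $\Phi$ forces $\sum_i K_iK_i^*=I$, although this is not required in order to identify the $K_i$ as Kraus operators.

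I expect the only delicate point to be the bookkeeping of conventions: confirming that the block indexing used in $C_\Phi=\sum E_{pq}\otimes\Phi(E_{pq})$ matches the reshaping convention of Lemma 1, in which the rows of $K_i$ are read off from consecutive blocks of $\kappa_i$, so that the $(p,q)$ blocks on both sides genuinely correspond. Once the two conventions are aligned, the proof reduces to a direct block-wise comparison.
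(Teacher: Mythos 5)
Your proposal is correct and follows essentially the same route as the paper's own proof: both compare the block description of $C_\Phi$ coming from $\sum_{p,q}E_{pq}\otimes\Phi(E_{pq})$ with the expansion $C_\Phi=R^*R=\sum_{i}\kappa_i^*\kappa_i$, apply Lemma 1 blockwise to get $\Phi(E_{pq})=\sum_{i}K_i^*E_{pq}K_i$, and conclude by linearity. Your added attention to the reshaping convention and the remark on trace preservation are harmless extras beyond what the paper writes.
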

\begin{proof}
Choi matrix of $\Phi$ is in the following form:
\begin{gather}
C_{\Phi}=\sum_{i,j=1}^{n}E_{ij}\bigotimes \Phi(E_{ij})=(\Phi(E_{ij}))_{1\leq i,j\leq n}.\notag
\end{gather}
On the other hand,
\begin{align}
C_{\Phi}=R^{*}R=\left(
                  \begin{array}{cccccc}
                    \kappa_{1}^{*} & \kappa_{2}^{*} & . & . & . & \kappa_{n^{2}}^{*} \\
                  \end{array}
                \right)\left(
                         \begin{array}{c}
                           \kappa_{1} \\
                           \kappa_{2} \\
                           . \\
                           . \\
                           . \\
                           \kappa_{n^{2}} \\
                         \end{array}
                       \right)=\sum_{i=1}^{n^{2}}\kappa_{i}^{*}\kappa_{i}.\notag
\end{align}
Then applying lemma 1, $(\Phi(E_{ij}))_{1\leq i,j\leq n}=\sum_{l=1}^{n^{2}}(K_{l}^{*}E_{ij}K_{l})_{1\leq i,j\leq n}$. Therefore for any $A$, $\Phi(A)=\sum_{i=1}^{n^{2}}K_{i}^{*}AK_{i}$.
\end{proof}
Now we are in a position to assert the main result of this section. We are going to find an explicit form for Kraus operators of channels 1-4 by using the Cholesky decomposition. Here we solve this problem for channel 3 and for other channels it is solved similarly.
\begin{theorem}{4}
\emph{For hybrid depolarizing classical quantum channel}
\begin{align*}
\Phi=\mathrm{diag}(1,\underbrace{-p,...,-p}_{N},\underbrace{-p,...,-p}_{N},\underbrace{p,...,p}_{n-1}),
\end{align*}
\emph{Kraus operators can be determined in the following explicit form:}
{\[K_{1}=\left(
        \begin{array}{cccc}
          \sqrt{a_{0}} & 0 & ... & 0 \\
          0 & \frac{b_{0}}{\sqrt{a_{0}}} & ... & 0 \\
          ... & ... & ... & ... \\
          0 & 0 & ... & \frac{b_{0}}{\sqrt{a_{0}}} \\
        \end{array}
      \right),\quad K_{2}=\left(
                      \begin{array}{cccc}
                        0 & \sqrt{\frac{1-p}{n}} & ... & 0 \\
                        0 & 0 & ... & 0 \\
                        ... & ...& ... & ... \\
                        0 & 0 & ... & 0 \\
                      \end{array}
                    \right), \]}\\
{\[...,K_{n}=\left(
        \begin{array}{cccc}
         0 & 0 & ... & \sqrt{\frac{1-p}{n}} \\
         0 & 0 & ... & 0 \\
         ... & ... & ... & ... \\
         0 & 0 & ... & 0 \\
          \end{array}
         \right),\quad K_{n+1}=\left(
          \begin{array}{cccc}
            0 & 0 & ... & 0 \\
            \sqrt{\frac{1-p}{n}} & 0 & ... & 0 \\
            ... & ... & ... & ... \\
            0 & 0 & ... & 0 \\
          \end{array}
        \right),\]}\\
{\[K_{n+2}=\left(
                          \begin{array}{cccc}
                            0 & 0 & ... & 0 \\
                            0 & \sqrt{a_{1}} & ... & 0 \\
                            ... & ... & ... & ... \\
                            0 & 0 & ... & \frac{b_{1}}{\sqrt{a_{1}}} \\
                          \end{array}
                        \right),...,K_{2n}=\left(
                                             \begin{array}{cccc}
                                               0 & 0 & ... & 0 \\
                                               0 & 0 & ... & \sqrt{\frac{1-p}{n}} \\
                                               ... & ... & ... & ... \\
                                               0 & 0 & ... & 0 \\
                                             \end{array}
                                           \right), \]}\\
{\[...,K_{n^{2}-1}=\left(
                                            \begin{array}{cccc}
                                              0 & ... & 0 & 0 \\
                                              0 & ... & 0 & 0 \\
                                              ... & ... & ... & ... \\
                                              0 & ... & \sqrt{\frac{1-p}{n}} & 0 \\
                                            \end{array}
                                          \right),\quad K_{n^{2}}=\left(
                                                              \begin{array}{cccc}
                                                                0 & 0 & ... & 0 \\
                                                                0 & 0 & ... & 0 \\
                                                                ... & ... & ... & ... \\
                                                                0 & 0 & ... & \sqrt{a_{n-1}} \\
                                                              \end{array}
                                                            \right),\]}
\emph{where} $a_{m}=(2p+\frac{1-p}{n})(1+\frac{-p}{-pm+2p+\frac{1-p}{n}})$ \emph{for} $m=1,2,...,n-1$; $b_{m}=(2p+\frac{1-p}{n})(\frac{-p}{-pm+2p+\frac{1-p}{n}})$ \emph{for} $m=1,2,...,n-2$; $a_{0}=p+\frac{1-p}{n}$\emph{, and} $b_{0}=-p$.
\end{theorem}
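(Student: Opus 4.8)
The plan is to apply Lemma 2: compute the Choi matrix $C_{\Phi}$, produce a factorization $C_{\Phi}=R^{*}R$ by the Cholesky decomposition, and then recover the Kraus operators by reshaping the rows of $R$ through Lemma 1. Two ingredients are needed, namely the explicit entries of $C_{\Phi}$ and its Cholesky factor.

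First I would compute $C_{\Phi}=(\Phi(E_{ij}))_{1\le i,j\le n}$. For an off-diagonal unit with $i\ne j$ I would expand $E_{ij}$ in the basis $\beta$; since the pair $(i,j)$ contributes only the two generalized Pauli matrices $\sigma_{1q},\sigma_{2q}$ for the appropriate $q$, one gets $E_{ij}=\frac{1}{\sqrt 2}(e_{1q}+ie_{2q})$ (and $E_{ji}$ with a sign change). Because $\Phi$ multiplies both $e_{1q}$ and $e_{2q}$ by $-p$ for channel $3$, this collapses to the clean identity $\Phi(E_{ij})=-p\,E_{ij}$ for $i\ne j$. For the diagonal units I would reuse Theorem 3 with all $t_i=p$: the sum $\sum_{i=k}^{n-1}\frac{1}{i(i+1)}$ telescopes to $\frac1k-\frac1n$, so that $[\Phi(E_{kk})]_{kk}=p+\frac{1-p}{n}=a_{0}$ while $[\Phi(E_{kk})]_{jj}=\frac{1-p}{n}$ for $j\ne k$. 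Writing the $n^{2}$ tensor indices as pairs $(i,r)$, the matrix $C_{\Phi}$ then carries the isolated diagonal entry $\frac{1-p}{n}$ at every off-diagonal index $(i,r)$ with $i\ne r$, whereas the $n$ diagonal indices $(i,i)$ support the submatrix $M=c\,I-p\,J$, where $c:=2p+\frac{1-p}{n}$ and $J$ is the all-ones matrix (diagonal $a_{0}$, off-diagonal $-p$).

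Next I would compute the Cholesky factor. The structural point is that each off-diagonal index is \emph{decoupled} — it occurs only as a diagonal entry of $C_{\Phi}$ — so no fill-in can reach those columns, and in the standard ordering the upper-triangular factor $R$ splits cleanly: every off-diagonal index yields a row whose single nonzero entry is $\sqrt{\frac{1-p}{n}}$, and the diagonal indices yield the Cholesky factor $R_{M}$ of $M$. For $R_{M}$ I would run the pivot recursion: the $(m+1)$-st pivot is $\sqrt{a_{m}}$ and the off-diagonal entries in that row are $b_{m}/\sqrt{a_{m}}$, with $a_{m}=a_{0}-\sum_{l<m}b_{l}^{2}/a_{l}$ and $b_{m}=-p-\sum_{l<m}b_{l}^{2}/a_{l}$. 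Introducing $d_{m}:=c-pm$ turns $b_{m}^{2}/a_{m}$ into $cp^{2}/(d_{m}d_{m+1})$, and a partial-fraction together with a telescoping computation collapses the sums to the closed forms $a_{m}=c\,d_{m+1}/d_{m}$ and $b_{m}=-pc/d_{m}$, which are precisely the stated expressions (in particular $a_{0}=p+\frac{1-p}{n}$ and $b_{0}=-p$, and the identity $a_{m}-b_{m}=c$ serves as a useful internal check).

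Finally I would invoke Lemma 1. Reshaping each row of $R$ row-by-row into an $n\times n$ matrix, the single-entry rows become the single-entry operators $K_{2},\dots,K_{n},K_{n+1},\dots$ carrying $\sqrt{\frac{1-p}{n}}$ in the matching off-diagonal slot, while the diagonal-index rows become the diagonal operators $K_{1},K_{n+2},\dots,K_{n^{2}}$ exhibiting the $\sqrt{a_{m}}$ and $b_{m}/\sqrt{a_{m}}$ pattern; Lemma 2 then certifies that $\{K_{i}\}_{i=1}^{n^{2}}$ is a set of Kraus operators of $\Phi$. I expect the main obstacle to be twofold: carrying out the telescoping that yields the closed-form $a_{m},b_{m}$, and keeping the bookkeeping straight between the tensor-product index $(i,r)$ and the matrix position $(i,r)$, so that each row of $R$ is sent to the correct Kraus operator.
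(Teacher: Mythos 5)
Your proposal is correct and takes essentially the same route as the paper: both compute the Choi matrix (with $\Phi(E_{ij})=-p\,E_{ij}$ off the diagonal and diagonal blocks having entries $p+\frac{1-p}{n}$ and $\frac{1-p}{n}$), take its Cholesky factor, reshape its rows into Kraus operators via Lemmas 1 and 2, and solve the same pivot recursion $a_{m}=a_{m-1}-b_{m-1}^{2}/a_{m-1}$, $b_{m}=b_{m-1}-b_{m-1}^{2}/a_{m-1}$ to the stated closed forms. Your departures are purely presentational — you isolate the decoupled $\frac{1-p}{n}$ indices and the submatrix $cI-pJ$ up front instead of tracking the Schur complements step by step, and you close the recursion via the substitution $d_{m}=c-pm$ rather than the paper's invariant $a_{m}-b_{m}=c$ together with the harmonic relation $\frac{1}{b_{m}}-\frac{1}{b_{m-1}}=\frac{1}{c}$.
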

\begin{remark}{2}
The statement and proof of theorem 4 is based on the condition:\\ $(p+\frac{1-p}{n})(\frac{1-p}{n})\neq 0$ (in fact the entries of main diagonal of Choi matrix be positive). In case at least one of them is 0, matrix $R$ in Cholesky decomposition of $C_{\Phi}$ is not necessarily unique and then Kraus operators which are obtained in this way are so.
\end{remark}
\begin{proof}
It follows from (1) that for $t_{1}=t_{2}=...=t_{n-1}=p$,
\begin{align}
\Phi(E_{11})=\frac{1}{\sqrt{n}}e_{0}+\sum_{i=1}^{n-1}\frac{p}{\sqrt{i(i+1)}}e_{3i}=\left(
                                                                                     \begin{array}{cccc}
                                                                                       \alpha & 0 & ... & 0 \\
                                                                                       0 & \beta & ... & 0 \\
                                                                                       ... & ... & ... & ... \\
                                                                                       0 & 0 & ... & \beta \\
                                                                                     \end{array}
                                                                                   \right)
,\notag
\end{align}
where $\alpha=p+\frac{1-p}{n}$ , $\beta=\frac{1-p}{n}$. Similarly we can obtain $\Phi (E_{kk})$ for $k=2, ..., n$.
We note that $E_{12}=\frac{1}{\sqrt{2}}e_{11}+\frac{i}{\sqrt{2}}e_{21}$, hence
\begin{align}
\Phi(E_{12})=\frac{-p}{\sqrt{2}}e_{11}+\frac{-pi}{\sqrt{2}}e_{21}=\left(
                                                                 \begin{array}{cccc}
                                                                   0 & \gamma & ... & 0 \\
                                                                   0 & 0 & ... & 0 \\
                                                                   ... & ... & ... & ... \\
                                                                   0 & 0 & ... & 0 \\
                                                                 \end{array}
                                                               \right),\notag
\end{align}
where $\gamma=-p$. Similarly we can obtain $\Phi(E_{kj})$ for $1\leq k\neq j\leq n$.

Therefore Choi matrix of $\Phi$ has the following form:

\[C_{\Phi}=
  \left(\begin{array}{@{}cccc|cccc|c|cccc@{}}
             \alpha & 0 & ... & 0 & 0 & \gamma & ... & 0 & ... & 0 & 0 & ... & \gamma\\
             0 & \beta & ... & 0 & 0 & 0 & ... & 0 & ... & 0 & 0 & ... & 0 \\
             ... & ... & ... & ... & ... & ... & ... & ... & ... & ... & ... & ... & ... \\
             0 & 0 & ... & \beta & 0 & 0 & ... & 0 & ... & 0 & 0 & ... & 0 \\\hline
             0 & 0 & ... & 0 & \beta & 0 & ... & 0 & ... & 0 & 0 & ... & 0 \\
             \gamma & 0 & ... & 0 & 0 & \alpha & ... & 0 & ... & 0 & 0 & ... & \gamma \\
             ... & ... & ... & ... & ... & ... & ... & ... & ... & ... & ... & ... & ... \\
             0 & 0 & ... & 0 & 0 & 0 & ... & \beta & ... & 0 & 0 & ... & 0 \\\hline
             ... & ... & ... & ... & ... & ... & ... & ... & ... & ... & ... & ... & ... \\\hline
             0 & 0 & ... & 0 & 0 & 0 & ... & 0 & ... & \beta & 0 & ... & 0 \\
             0 & 0 & ... & 0 & 0 & 0 & ... & 0 & ... & ... & ... & ... & ... \\
             ... & ... & ... & ... & ... & ... & ... & ... & ... & 0 & ... & \beta & 0 \\
             \gamma & 0 & ... & 0 & 0 & \gamma & ... & 0 & ... & 0 & ... & 0 & \alpha \\
  \end{array}\right),
\]\\
Let $C_{\Phi}=R^{t}R$ be Cholesky decomposition of $C_{\Phi}$. We divide matrices into 4 blocks as the following:
\[C_{\Phi}=
  \left(\begin{array}{@{}c|c @{}}
             \alpha & A_{10} \\\hline
             A_{10}^{t} & A_{20} \\
  \end{array}\right)=
  \left(\begin{array}{@{}c|c @{}}
             r_{0} & \textbf{0} \\\hline
             R_{10}^{t} & R_{20}^{t} \\
  \end{array}\right)
  \left(\begin{array}{@{}c|c @{}}
             r_{0} & R_{10} \\\hline
             \textbf{0} & R_{20} \\
  \end{array}\right),\]
where $r_{0}=\sqrt{\alpha}$, $R_{10}=\frac{1}{r_{0}}A_{10}$, $A_{20}-R_{10}^{t}R_{10}=R_{20}^{t}R_{20}$, and $R_{20}$ is an upper triangular matrix. Then
\begin{align}
R_{10}=\frac{1}{\sqrt{\alpha}}\left(
                               \begin{array}{cccccccccccc}
                                 0 & ... & 0 & \gamma & 0 & ... & 0 & \gamma & 0 & ... & 0 & \gamma \\
                               \end{array}
                             \right),\notag
\end{align}
we let
\begin{align}
\kappa_{1}=\left(
            \begin{array}{ccccccccccccc}
              \sqrt{\alpha} & 0 & ... & 0 & \frac{\gamma}{\sqrt{\alpha}} & 0 & ... & 0 & \frac{\gamma}{\sqrt{\alpha}} & 0 & ... & 0 & \frac{\gamma}{\sqrt{\alpha}} \\
               \end{array}
                \right),\notag
\end{align}
and its associated $K_{1}$ in lemma 2 as the following:
\begin{align}
K_{1}=\left(
        \begin{array}{cccc}
          \sqrt{\alpha} & 0 & ... & 0 \\
          0 & \frac{\gamma}{\sqrt{\alpha}} & ... & 0 \\
          ... & ... & ... & ... \\
          0 & 0 & ... & \frac{\gamma}{\sqrt{\alpha}} \\
        \end{array}
      \right).\notag
\end{align}
Now we consider $A_{20}-R_{10}^{t}R_{10}$:
\[A_{20}-R_{10}^{t}R_{10}=
  \left(\begin{array}{@{}cccc|cccc|c|cccc@{}}
             \beta & 0 & ... & 0 & 0 & 0 & ... & 0 & ... & 0 & 0 & ... & 0\\
             0 & \beta & ... & 0 & 0 & 0 & ... & 0 & ... & 0 & 0 & ... & 0 \\
             ... & ... & ... & ... & ... & ... & ... & ... & ... & ... & ... & ... & ... \\
             0 & 0 & ... & \beta & 0 & 0 & ... & 0 & ... & 0 & 0 & ... & 0 \\\hline
             0 & 0 & ... & 0 & \beta & 0 & ... & 0 & ... & 0 & 0 & ... & 0 \\
             0 & 0 & ... & 0 & 0 & \alpha-\frac{\gamma^{2}}{\alpha} & ... & 0 & ... & 0 & 0 & ... & \gamma-\frac{\gamma^{2}}{\alpha} \\
             ... & ... & ... & ... & ... & ... & ... & ... & ... & ... & ... & ... & ... \\
             0 & 0 & ... & 0 & 0 & 0 & ... & \beta & ... & 0 & 0 & ... & 0 \\\hline
             ... & ... & ... & ... & ... & ... & ... & ... & ... & ... & ... & ... & ... \\\hline
             0 & 0 & ... & 0 & 0 & 0 & ... & 0 & ... & \beta & 0 & ... & 0 \\
             0 & 0 & ... & 0 & 0 & 0 & ... & 0 & ... & ... & ... & ... & ... \\
             ... & ... & ... & ... & ... & ... & ... & ... & ... & 0 & ... & \beta & 0 \\
             0 & 0 & ... & 0 & 0 & \gamma-\frac{\gamma^{2}}{\alpha} & ... & 0 & ... & 0 & ... & 0 & \alpha-\frac{\gamma^{2}}{\alpha} \\
  \end{array}\right),
\]\\
we note that by applying algorithm $\alpha$ , $\gamma$ are replaced with $\alpha-\frac{\gamma^{2}}{\alpha}$ , $\gamma-\frac{\gamma^{2}}{\alpha}$ respectively.
\[A_{20}-R_{10}^{t}R_{10}=
  \left(\begin{array}{@{}c|c @{}}
             \beta & A_{11} \\\hline
             A_{11}^{t} & A_{21} \\
  \end{array}\right)=
  \left(\begin{array}{@{}c|c @{}}
             r_{1} & \textbf{0} \\\hline
             R_{11}^{t} & R_{21}^{t} \\
  \end{array}\right)
  \left(\begin{array}{@{}c|c @{}}
             r_{1} & R_{11} \\\hline
             \textbf{0} & R_{21} \\
  \end{array}\right),\]
where $R_{20}=\left(\begin{array}{@{}c|c @{}}
             r_{1} & R_{11} \\\hline
             \textbf{0} & R_{21} \\
  \end{array}\right)$, $r_{1}=\sqrt{\beta}$, $R_{11}=\textbf{0}_{1\times(n^{2}-2)}$, and $R_{21}$ is an upper triangular matrix.
  We let $\kappa_{2}=\left(
   \begin{array}{ccccc}
        0 & \sqrt{\beta} & 0 & ... & 0 \\
           \end{array}
            \right)$ and its associated $K_{2}$ in lemma 2 as the following:
\begin{align}
K_{2}=\left(
        \begin{array}{cccc}
          0 & \sqrt{\beta} & ... & 0 \\
          0 & 0 & ... & 0 \\
          ... & ... & ... & ... \\
          0 & 0 & ... & 0 \\
        \end{array}
      \right).\notag
\end{align}
By repeating algorithm, similarly we obtain
{\tiny\[K_{3}=\left(
        \begin{array}{ccccc}
          0 & 0 & \sqrt{\beta} &... & 0 \\
          0 & 0 & 0 & ... & 0 \\
          ... & ... & ... & ... & ... \\
          0 & 0 & 0 & ... & 0 \\
        \end{array}
      \right),...,K_{n}=\left(
                      \begin{array}{cccc}
                        0 & 0 & ... & \sqrt{\beta} \\
                        0 & 0 & ... & 0 \\
                        ... & ...& ... & ... \\
                        0 & 0 & ... & 0 \\
                      \end{array}
                    \right),K_{n+1}=\left(
                                        \begin{array}{cccc}
                                          0 & 0 & ... & 0 \\
                                          \sqrt{\beta} & 0 & ... & 0 \\
                                          ... & ... & ... & ... \\
                                          0 & 0 & ... & 0 \\
                                        \end{array}
                                      \right). \]}\\
Now we reach the following matrix:
\[
  \left(\begin{array}{@{}cccc|ccccc|c|cccc@{}}
             \alpha-\frac{\gamma^{2}}{\alpha} & 0 & ... & 0 & 0 & 0 & \gamma-\frac{\gamma^{2}}{\alpha} & ... & 0 & ... & 0 & 0 & ... & \gamma-\frac{\gamma^{2}}{\alpha}\\
             0 & \beta & ... & 0 & 0 & 0 & 0 & ... & 0 & ... & 0 & 0 & ... & 0 \\
             ... & ... & ... & ... & ... & ... & ... & ... & ... & ... & ... & ... & ... & ... \\
             0 & 0 & ... & \beta & 0 & 0 & 0 & ... & 0 & ... & 0 & 0 & ... & 0 \\\hline
             0 & 0 & ... & 0 & \beta & 0 & 0 & ... & 0 & ... & 0 & 0 & ... & 0 \\
             0 & 0 & ... & 0 & 0 & \beta & 0 & ... & 0 & ... & 0 & 0 & ... & 0 \\
             \gamma-\frac{\gamma^{2}}{\alpha} & 0 & ... & 0 & 0 & 0 & \alpha-\frac{\gamma^{2}}{\alpha} & ... & 0 & ... & 0 & 0 & ... & \gamma-\frac{\gamma^{2}}{\alpha} \\
             ... & ... & ... & ... & ... & ... & ... & ... & ... & ... & ... & ... & ... & ... \\
             0 & 0 & ... & 0 & 0 & 0 & 0 & ... & \beta & ... & 0 & 0 & ... & 0 \\\hline
             ... & ... & ... & ... & ... & ... & ... & ... & ... & ... & ... & ... & ... & ... \\\hline
             0 & 0 & ... & 0 & 0 & 0 & 0 & ... & 0 & ... & \beta & 0 & ... & 0 \\
             0 & 0 & ... & 0 & 0 & 0 & 0 & ... & 0 & ... & ... & ... & ... & ... \\
             ... & ... & ... & ... & ... & ... & ... & ... & ... & ... & 0 & ... & \beta & 0 \\
             \gamma-\frac{\gamma^{2}}{\alpha} & 0 & ... & 0 & 0 & 0 & \gamma-\frac{\gamma^{2}}{\alpha} & ... & 0 & ... & 0 & ... & 0 & \alpha-\frac{\gamma^{2}}{\alpha} \\
  \end{array}\right),
\]\\
by applying algorithm, we obtain
\[K_{n+2}=\left(
        \begin{array}{ccccc}
          0 & 0 & 0 &... & 0 \\
          0 & \sqrt{\alpha-\frac{\gamma^{2}}{\alpha}} & 0 & ... & 0 \\
          0 & 0 & \frac{\gamma-\frac{\gamma^{2}}{\alpha}}{\sqrt{\alpha-\frac{\gamma^{2}}{\alpha}}} &... & 0 \\
          ... & ... & ... & ...& ... \\
          0 & 0 & 0 & ... & \frac{\gamma-\frac{\gamma^{2}}{\alpha}}{\sqrt{\alpha-\frac{\gamma^{2}}{\alpha}}} \\
        \end{array}\right),\]
and by repeating it similarly
{\tiny\[K_{n+3}=\left(
        \begin{array}{ccccc}
          0 & 0 & 0 &... & 0 \\
          0 & 0 & \sqrt{\beta} & ... & 0 \\
          0 & 0 & 0 & ... & 0 \\
          ... & ... & ... & ... & ... \\
          0 & 0 & 0 & ... & 0 \\
        \end{array}
      \right),...,K_{2n+1}=\left(
                      \begin{array}{cccc}
                        0 & 0 & ... & 0 \\
                        0 & 0 & ... & 0 \\
                        \sqrt{\beta} & 0 & ... & 0 \\
                        ... & ...& ...&... \\
                        0 & 0 & ... & 0 \\
                      \end{array}
                    \right),K_{2n+2}=\left(
                      \begin{array}{cccc}
                        0 & 0 & ... & 0 \\
                        0 & 0 & ... & 0 \\
                        0 & \sqrt{\beta} & ... & 0 \\
                        ... & ...& ...&... \\
                        0 & 0 & ... & 0 \\
                      \end{array}
                    \right).\]}
By repeating this algorithm, we see that $\beta$'s are not changed, but $\alpha$'s, $\gamma$'s will be changed in the following way:
$a_{m}=a_{m-1}-\frac{b_{m-1}^{2}}{a_{m-1}}$, for $m=1,2,...,n-1$ and $b_{m}=b_{m-1}-\frac{b_{m-1}^{2}}{a_{m-1}}$, for $m=1,2,...,n-2$, where $a_{0}=\alpha$, $b_{0}=\gamma$. Therefore $a_{m}-b_{m}=a_{m-1}-b_{m-1}=...=a_{0}-b_{0}=2p+\frac{1-p}{n}$ and then $b_{m}=b_{m-1}-\frac{b_{m-1}^{2}}{b_{m-1}+2p+\frac{1-p}{n}}$. By rearrangement of the last equation, we have $\frac{1}{b_{m}}-\frac{1}{b_{m-1}}=\frac{1}{2p+\frac{1-p}{n}}$ and finally
\begin{gather*}
a_{m}=(2p+\frac{1-p}{n})(1+\frac{-p}{-pm+2p+\frac{1-p}{n}}),\quad  1\leq m\leq n-1;\\
b_{m}=(2p+\frac{1-p}{n})(\frac{-p}{-pm+2p+\frac{1-p}{n}}),\quad  1\leq m\leq n-2.
\end{gather*}
\end{proof}
\begin{acknowledgement}{}
Advice given by Professor Grigori G.Amosov has been a great help in writing this paper. I would like to offer my special thanks to him.
\end{acknowledgement}


\end{document}